%
\documentclass[runningheads, envcountsame]{llncs}
\usepackage[mathlines]{lineno}
\usepackage{etoolbox}
\usepackage{url}
\usepackage{todonotes}
\usepackage{comment}
\usepackage[utf8]{inputenc}
\usepackage{rotating}
\usepackage{graphicx}
\usepackage{bm}

\usepackage{tikz}
\usetikzlibrary{arrows}
%

\usepackage{macros}


\newcommand*\linenomathpatch[1]{%
  \cspreto{#1}{\linenomath}%
  \cspreto{#1*}{\linenomath}%
  \csappto{end#1}{\endlinenomath}%
  \csappto{end#1*}{\endlinenomath}%
}
\newcommand*\linenomathpatchAMS[1]{%
  \cspreto{#1}{\linenomathAMS}%
  \cspreto{#1*}{\linenomathAMS}%
  \csappto{end#1}{\endlinenomath}%
  \csappto{end#1*}{\endlinenomath}%
}

\expandafter\ifx\linenomath\linenomathWithnumbers
  \let\linenomathAMS\linenomathWithnumbers
  \patchcmd\linenomathAMS{\advance\postdisplaypenalty\linenopenalty}{}{}{}
\else
  \let\linenomathAMS\linenomathNonumbers
\fi

\linenomathpatch{equation}
\linenomathpatchAMS{gather}
\linenomathpatchAMS{multline}
\linenomathpatchAMS{align}
\linenomathpatchAMS{alignat}
\linenomathpatchAMS{flalign}

\makeatletter
\patchcmd{\mmeasure@}{\measuring@true}{
  \measuring@true
  \ifnum-\linenopenaltypar>\interdisplaylinepenalty
    \advance\interdisplaylinepenalty-\linenopenalty
  \fi
  }{}{}
\makeatother


\begin{document}
\title{Computing Repairs Under Functional and Inclusion Dependencies via Argumentation
} 
%
\titlerunning{Computing Repairs via AFs}
%
\author{Yasir Mahmood 
\inst{1}
\and
Jonni Virtema
\inst{2}
\and 
Timon Barlag
\inst{3}
\and \\
Axel-Cyrille Ngonga Ngomo
\inst{1}}

\authorrunning{Mahmood et al.}
%
\institute{DICE group, Department of Computer Science, Paderborn University, Germany
		\email{yasir.mahmood@uni-paderborn.de}\\
		\email{axel.ngonga@upb.de} \\
	\and
	Department of Computer Science, University of Sheffield, United Kingdom\\
		\email{j.t.virtema@sheffield.ac.uk}\\
	\and 
	Institut f\"ur Theoretische Informatik, Leibniz Universit\"at Hannover, Germany
		\email{barlag@thi.uni-hannover.de}
	}
\maketitle              
\begin{abstract}
We discover a connection between finding subset-maximal repairs for sets of functional and inclusion dependencies,  and computing extensions within argumentation frameworks (AFs). 
We study the complexity of existence of a repair, and deciding whether a given tuple belongs to some (or every) repair, by simulating the instances of these problems via AFs.
We prove that subset-maximal repairs under 
functional dependencies correspond to the naive extensions, which also coincide with the preferred and stable extensions in the resulting AFs. 
For inclusion dependencies one needs a pre-processing step on the resulting AFs in order for the extensions to coincide.
Allowing both types of dependencies breaks this relationship between extensions and only preferred semantics captures the repairs.
Finally, we establish that the complexities of the above decision problems are $\NP$-complete and $\PiP$-complete, when both functional and inclusion dependencies are allowed.

\keywords{complexity theory \and database repairs \and integrity constraints \and abstract argumentation}

\end{abstract}

\section{Introduction}

In real-world applications the provenance of data can be very diverse and include non-trustworthy sources. Thus databases are often inconsistent in practice due to the presence of integrity constraints, and a rich theory has been developed to deal with this inconsistency.
One of the main approaches for handling inconsistency is \emph{database repairing}. The goal is to identify and \emph{repair} inconsistencies in data in order to obtain a consistent database that satisfies the imposed constraints. In the usual approaches one would search for a database that satisfies the given constraints and differs minimally from the original database; the obtained database is called a \emph{repair} of the original.
Some of the most prominent notions of repairs are set-based repairs~\cite{tenCate:2012,barcelo2017data}, attribute-based repairs~\cite{Wijsen:2003}, and cardinality-based repairs~\cite{LopatenkoB07}.
%

Dung's abstract argumentation framework~\cite{Dung95a} has been specifically designed to model conflicts and support relationships among arguments. 
An abstract argumentation framework (AF) represents arguments and their conflicts through directed graphs and allows for a convenient exploration of the conflicts at an abstract level. 
AFs have been explored extensively 
for representation and reasoning with inconsistent knowledge-bases (KBs) covering datalog, existential rules, and description logics (see e.g., \cite{AriouaC16,AriouaCV17,AriouaTC15,YoungMR16,YunVC20,BienvenuB20} and \cite{ArieliBH19} for an overview). 
The common goal in each of these works is to formally establish a connection between inconsistent KBs and AFs such that the argumentation machinery then outputs extensions equivalent to the set of repairs of the KB. 
Nevertheless, in the setting of relational databases and integrity constraint, there is still a gap with respect to how or whether a connection between inconsistent databases and AFs can be established.
To the best of our knowledge, only functional dependencies (FDs) have been investigated in the context of AFs, as discussed in~\cite{BienvenuB20}.
We expand this area of research by establishing further connections between repairs and abstract argumentation frameworks when further integrity constraints are allowed. 

In this paper, we focus on subset repairs of relational databases when the integrity constraints are functional and inclusion dependencies (IDs).
We are interested in the computational problems of deciding the existence of a repair, and determining whether a given tuple belongs to some (or every) repair.
We show how subset-maximal repairs for a set of functional dependencies and inclusion dependencies can be obtained 
by computing the naive, preferred, or stable extensions (see Section~\ref{sec:preli} for definitions) in the related AFs.
Repairs under 
functional dependencies correspond to the naive extensions, which also coincide with the preferred and stable extensions in the resulting AFs. 
For inclusion dependencies one needs a pre-processing step on the resulting AFs in order for the extensions to coincide.
Allowing both types of dependencies breaks this relationship between extensions and only preferred semantics captures the maximal repairs.
Finally, we consider the complexity of deciding 
whether a tuple belongs to at least one or to all repairs, respectively. See Table \ref{table:cont} for the complexity results.

By employing Dung's argumentation framework to model repairs of a relational database, one can effectively abstract away from the detailed-content of individual entries in the database and focus solely on their relationships with other entries.
This approach provides a clearer understanding of why specific records either appear or do not appear in a repair, as well as the reasons certain values may be absent from query answers.
Furthermore, this modeling approach allows for the incorporation of additional information about records, such as priorities among them, directly at an abstract level. 

\paragraph{Related Work}
The problem of computing subset maximal repairs and its complexity has been explored extensively in the database setting~\cite{AfratiK09,ArenasBC01,CHOMICKI200590,HannulaW22,LivshitsKR20,StaworkoC10} (see \cite{Bertossi06,Bertossi19} for an overview).
The notion of conflict graphs and hypergraphs has been introduced before in the case of functional dependencies~\cite{kimelfeld-17,KimelfeldLP20,StaworkoCM12}.
In particular, a correspondence between repairs and subset maximal independent sets of the conflict graph for FDs has been established~\cite{ArenasBC01}. 
Notice that the same definition also yields a correspondence between repairs and the naive extensions when the conflict graph is seen as an argumentation framework. 
Nevertheless, up to our knowledge, no work has considered a similar graph representation when inclusion dependencies are taken into account.
%
Hannula and Wijsen~\cite{HannulaW22} addressed the problem of consistent query answering with respect to primary and foreign keys. 
Their setting allows the insertion of new tuples to fulfill foreign key constraints rather than only deleting.
Our work differs from the previous work, since it combines functional dependencies (a subclass of equality-generating dependencies) and inclusion dependencies (a subclass of tuple-generating dependencies).
Moreover, one of our main contributions lies in connecting repairs under FDs and IDs to the extensions of argumentation frameworks in Dung's setting~\cite{Dung95a}.
Finally, the connection between AFs and 
preferred repairs has been explored in the context of prioritized description logic~\cite{BienvenuB20} and datalog knowledge bases~\cite{AriouaTCB14,HoAASA22}.

\begin{table}[t]
	\centering
		\begin{tabular}{l@{\; }c@{\; }c@{\; }c@{\; }c@{\; }c@{\; }} 
			\toprule
			& Atoms & AF-semantics &  \multicolumn{3}{c}{Complexity Results} \\
		& &\multicolumn{1}{c}{for $\REP$} & $\REP$ & $\somerepair$ & $\allrepair$  \\ \midrule 

			& FDs & $\sigma \in \{\naive,\pref,\stab\}$ \scriptsize{(Thm. \ref{thm:ext-dep})} & (trivial) & (trivial) & $\in\Ptime$ \\
			& IDs & $\pref$ \scriptsize{(Thm. \ref{thm:ext-inc})} & $\in\Ptime$\textsuperscript{\cite{AfratiK09}} & $\in\Ptime$\textsuperscript{\cite{AfratiK09}} & $\in\Ptime$\textsuperscript{\cite{AfratiK09}} \\
			& FDs+IDs & $\pref$ \scriptsize{(Thm. \ref{thm:ext-both})} & $\NP$ \scriptsize{(Thm. \ref{thm:cred-both})} & $\NP$ \scriptsize{(Thm. \ref{thm:cred-both})} & $\PiP$  \scriptsize{(Thm. \ref{thm:skep-both})} \\
			
			\bottomrule
		\end{tabular}
	\caption{Overview of our main contributions. The complexity results depict completeness, unless specified otherwise. The second column indicates the AF-semantics corresponding to subset-repairs ($\REP$) for dependencies in the first column, and the later three columns present the complexity of each problem. The $\Ptime$-results are already known in the literature, whereas the remaining results are new.}\label{table:cont}
\end{table}
\section{Preliminaries}\label{sec:preli}
We assume that the reader is familiar with basics of complexity theory.
We will encounter, in particular, the complexity classes $\Ptime, \NP$, and $\PiP$.
In the following, we shortly recall the necessary definitions from databases and argumentation.


We begin by restricting our attention to unirelational databases as these suffice for establishing our desired connections to argumentation frameworks as well as to our hardness results (see Table \ref{table:cont}).
Towards the end (Sec.~\ref{sec:multi}), we highlight the required changes to expand this approach to the multirelational setting.
%
The unirelational case is also connected to the literature in team-semantics~\cite{vaananen07}, which is a logical framework where formulae are evaluated over  unirelational databases (teams in their terminology). 
In this setting the complexity of finding maximal satisfying subteams has been studied by Hannula and Hella~\cite{HannulaH22} for inclusion logic formulas and by Mahmood~\cite{Mthesis23} for propositional dependence logic. 
In the team-semantics literature, FDs are known as \emph{dependence} atoms and IDs as \emph{inclusion} atoms, denoted respectively as $\depa{x}{y}$ and $\inca{x}{y}$. \footnote{We borrow this notation and write $\depa{x}{y}$ and $\inca{x}{y}$ for FDs and IDs, respectively.}

\paragraph{Databases and Repairs}
For our setting, an instance of a database is a single table denoted as $T$. 
We call each entry in the table a \emph{tuple} which is associated with an identifier.
Formally, a table corresponds to a relational schema denoted as $T(x_1 ,\ldots, x_n)$, where $T$ is the relation name and $x_1,\ldots,x_n$ are distinct {attributes}.
For an attribute $x$ and a tuple $s\in T$, $s(x)$ denotes the value taken by $s$ for the attribute $x$ and for a sequence $\tuple x =(x_1,\ldots,x_k)$, $s(\tuple x)$ denotes the sequence of values $(s(x_1),\ldots, s(x_k))$.
Given a database $T$, then $\dom(T)$ denotes the \emph{active domain} of $T$, defined as the collection of all the values that occur in the tuples of $T$.

Let $T(x_1, \ldots, x_n)$ be a schema and $T$ be a database. 
A \emph{functional dependency} (FD) over $T$ is an expression of the form $\depa{x}{y}$ (also denoted as $\tuple x \rightarrow \tuple y$) for sequences $\tuple x, \tuple y$ of attributes in $T$.
A database $T$ satisfies $\depa{x}{y}$, denoted as $T\models \depa{x}{y}$ if for all $s,t\in T$: if $s(\tuple x)=t(\tuple x)$ then $s(\tuple y)=t(\tuple y)$.
That is, every two tuples from $T$ that agree on $\tuple x$ also agree on $\tuple y$.
Moreover, an \emph{inclusion dependency} (ID) is an expression of the form $\inca{x}{y}$ for two sequences $\tuple x$ and $\tuple y$ of attributes with same length.
The table $T$ satisfies $\inca{x}{y}$ ($T\models \inca{x}{y}$) if for each $s\in T$, there is some $t\in T$ such that $s(\tuple x)=t(\tuple y)$. 
Moreover, we call each such $t$ the \emph{satisfying tuple for} $s$ and $i$.
By a dependency atom, we mean either a functional or an inclusion dependency.

Let $T$ be a database and $B$ be a collection of dependency atoms.
Then $T$ is \emph{consistent} with respect to $B$, denoted as $T\models B$, if $T\models b$ for each $b\in B$.
Moreover, $T$ is \emph{inconsistent} with respect to $B$ if there is some $b\in B$  such that $T\not \models b$.
A \emph{subset-repair} of $T$ with respect to $B$ is a subset $P\subseteq T$ which is consistent with respect to $B$, and maximal in the sense that no set $P'$ exists such that it is consistent with respect to $B$ and $P \subset P' \subseteq T$.
In the following, we simply speak of a repair when we intend to mean a subset-repair.
Furthermore, we often consider a database $T$ without explicitly highlighting its schema.
Let $\calB= \langle T,B\rangle $ where $T$ is a database and $B$ is a set of dependency atoms, then $\repairs(\calB)$ denotes the set of all repairs for $\calB$.
Since an empty database satisfies each dependency trivially, we restrict the notion of a repair to non-empty databases.
The problem we are interested in ($\REP$) asks to decide whether there exists a repair 
for an instance $\calB$.
%
\problemdef{$\REP$}{an instance $\calB=\langle T, B\rangle$}{is is true that $\repairs(\calB)\neq \emptyset$}

Two further problems of interest are 
\emph{brave} and \emph{cautious} reasoning for a tuple $s\in T$. 
Given an instance $\calB =\langle T, B\rangle $ and tuple $s\in T$, then brave (cautious) reasoning for $s$ denoted as $\somerepair(s,\calB)$ ($\allrepair(s,\calB)$) asks whether $s$ belongs to some (every) repair for $\calB$.

\paragraph*{Abstract Argumentation}
We use Dung's argumentation framework~\cite{Dung95a} and consider only non-empty and finite sets of arguments~$A$.
An \emph{(argumentation) framework~(AF)} is a directed graph~$\calF=(A, R)$, where $A$ is a set of arguments and the relation $R \subseteq A\times A$ represents direct attacks between arguments.
If $S\subseteq A$, we say that an argument~$s \in A$ is \emph{defended by $S$ in $\calF$}, if for every $(s', s) \in R$ there exists $s'' \in S$ such that $(s'', s') \in R$.

In abstract argumentation one is interested in computing the so-called \emph{extensions}, which are subsets~$S \subseteq A$ of the arguments that have certain properties.
The set~$S$ of arguments is called \emph{conflict-free in~$\calF$} if $(S\times S) \cap R = \emptyset$.
Let $S$ be conflict-free, then $S$ is
\begin{enumerate}
	\item \emph{naive in $\calF$} if no $S' \supset S$ is \emph{conflict-free} in $\calF$;
	\item \emph{admissible in $\calF$} if every $s \in S$ is \emph{defended by $S$ in $\calF$}.
	
\end{enumerate}

Further, let 
$S$ be admissible. Then, $S$ is
\begin{enumerate}
	\setcounter{enumi}{2}
	\item \emph{preferred in~$\calF$}, if there is no $S' \supset S$ that is \emph{admissible in $\calF$};
	\item \emph{stable in~$\calF$} if every $s \in A \setminus S$ is \emph{attacked} by some $s' \in S$.
\end{enumerate}

We denote each of the mentioned semantics by abbreviations: $\conf, \naive,$ $\adm, \pref,$ and $\stab$, respectively.
For 
a semantics~$\sigma \in \{\conf, \naive, \adm, \pref,  \stab\}$, we write $\sigma(\calF)$ for the set of \emph{all extensions} of semantics~$\sigma$ in $\calF$ \footnote{We disallow the empty set ($\emptyset$) in extensions for the sake of compatibility with repairs. Nevertheless, one can allow $\emptyset$ as an extension in AFs and the empty database as repairs, without affecting our complexity results}
Now, we are ready to define the corresponding decision problem asking for extension existence with respect to a semantics $\sigma$.
\problemdef{$\sem\sigma $}{an argumentation framework~$\calF$}{is it true that $\sigma(\calF)\neq\emptyset$}

Finally, for an AF~$\calF{=}(A,R)$ and $a\in A$, the problems $\cred_{\sigma}$ and $\skep_{\sigma}$ ask whether~$a$ is in some $\sigma$-extension of $\calF$ (``\emph{credulously} accepted'') or every $\sigma$-extension of~$\calF$  (``\emph{skeptically} accepted''), respectively. 
The complexity of reasoning in argumentation is well understood, see~\cite[Table 1]{DvorakDunne17} for an overview.
In particular, $\cred_{\naive}$ and $\skep_{\naive}$ are in $\Ptime$, 
whereas, $\cred_{\pref}$  and $\skep_{\pref}$ are $\NP$-complete and $\PiP$-complete, respectively.
Moreover, the problem 
to decide whether there is a non-empty extension is in $\Ptime$ for $\naive$ and $\NP$-complete for $\pref$-semantics.
This makes $\naive$-semantics somewhat easier 
and $\pref$ the hardest among the considered semantics in this work.
\section{Inconsistent Databases via Argumentation Frameworks}\label{sec:rep-afs}
In the first two subsections, we consider instances containing only one type of dependemcy atoms to an AF. 
Then, we combine both (dependence and inclusion) atoms in the third subsection.
Given an instance $\calB = \langle T,B\rangle $ comprising a database $T$ and a set $B$ of dependencies, the goal is to capture all the subset-repairs for $\calB$ by $\sigma$-extensions of the resulting AF ($\mathcal F_\calB$) for some semantics $\sigma$.

In Section~\ref{sec:fd}, we encode an instance $\mathcal D = \langle T,D \rangle $ with database $T$ and a collection $D$ of FDs into an AF $\mathcal F_\calD$.
This is achieved by letting each tuple $s\in T$ be an argument.
Then the attack relation for $\AF{F}{D}$ simulates the violation between a pair $s,t\in T$ failing some $d\in  D$. 
Although the construction for FDs is similar to the approach adapted by Bienvenu and Bourgaux~\cite{BienvenuB20}, we do not consider priorities among tuples in the database and therefore establish that a weaker AF-semantics is already enough to capture repairs in our setting.

In Section~\ref{sec:inc}, we simulate 
an instance $\calI= \langle T, I\rangle$ including a collection $I$ of inclusion dependencies (IDs) via AFs.
The first observation is that the semantics of IDs requires the notion of \emph{support} or \emph{defense} rather than \emph{conflict} between tuples.
Then, we depict each tuple as an argument as well as use auxiliary arguments to simulate inclusion dependencies (i.e., to model the semantics for IDs).
This is achieved by letting $s_i$ be an argument for each $s\in T$ and $i\in I$ such that $s_i$ attacks $s$.
Then, the arguments defending $s$ against $s_i$ correspond precisely to the satisfying tuples $t\in T$ for $s$ and $i$.
Further, we add self-attacks for these auxiliary arguments to prohibit them from appearing in any extension.
Consequently, we establish a connection between repairs for $\calI$ and the extensions for AFs under preferred semantics. 
Finally, we establish that after a pre-processing on the resulting AF, the stable and naive extensions also yield repairs for $\calI$.

Having established that both FDs and IDs can be modeled in AFs via attacks, Section~\ref{sec:both} generalizes this approach by allowing both types of dependencies. 

\subsection{Simulating Functional Dependencies via AFs}\label{sec:fd}
We transform an instance $\mathcal D = \langle T,D \rangle $ with database $T$ and a collection $D$ of FDs to an AF $\mathcal F_\calD$ defined as follows.

\begin{definition}
	Let $\calD =\langle T, D\rangle$ be an instance of $\REP$ including a database $T$ and a collection $D$ of FDs. 
	Then, $\mathcal F_{\calD}$ denotes the following AF.
	\begin{itemize}
		\item  $A\dfn T$, that is, each $s\in T$ is seen as an argument.
		\item $R\dfn \{(s,t),(t,s) \mid \text{ there is some } d\in  D, \text{ s.t. } \{s,t\}\not\models d\}$.
	\end{itemize}
	We call $\AF{F}{D} $ the argumentation framework generated by the instance $\calD$. Moreover, we also call $R$ the \emph{conflict graph} for $\mathcal D$.
\end{definition}

Note that, for a given instance $\mathcal D$, the framework $\AF{F}{D}$ can be generated in polynomial time. 
The attack relation $R$ is constructed for each $d\in D$ by taking each pair $s,t\in T$ in turn and checking whether $\{s,t\}\models d$ or not.

\begin{example}\label{intro:ex-dep}
	Consider $\calD = \langle T,D \rangle$ with database $T=\{s,t,u,v\}$ as depicted inside table in Figure~\ref{fig:ex-dep} and FDs $\{\depas{\text{Emp\_ID}}{\text{Dept}}$, $ \depas{\text{Sup\_ID}}{\text{Building}}\}$. 
	Informally, each employee is 
	associated with a unique department and employees supervised by the same supervisor 
	work in the same building.
	Observe that, $\{s,t\}\not\models \depas{\text{Emp\_ID}}{\text{Dept}}$, $\{u,v\}\not\models \depas{\text{Emp\_ID}}{\text{Dept}}$, and $\{t,v\}\not\models \depas{\text{Sup\_ID}}{\text{Building}}$.
	The resulting AF $\AF{F}{D}$ is depicted on the right side  of  Figure~\ref{fig:ex-dep}.
	The preferred (as well as naive and stable) extensions of $\AF{F}{D}$ include $\{s,v\}, \{t,u\}$ and $\{s,u\}$. 
	Clearly, these three are the only repairs for $\calD$.

	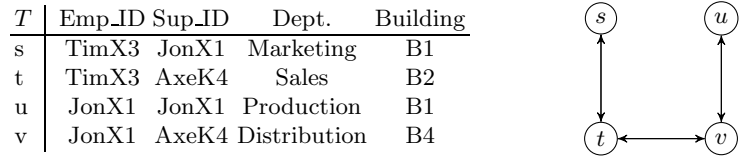
\begin{figure}[t]
		\centering
		\begin{tikzpicture}[scale=.8,arg/.style={circle,draw=black,fill=white,inner sep=.75mm}]
			\node[arg] (w) at (6,2) {$s$};
			\node[arg] (x) at (6,0) {$t$};
			\node[arg] (y) at (8,2) {$u$};
			\node[arg] (z) at (8,0) {$v$};
			
			\foreach \f/\t in {y/z,z/y,w/x,x/w,x/z,z/x}{
				\path[-stealth',draw=black] (\f) edge (\t);
			}
		
			\node (table) at (0,1) {
			\begin{tabular}{l @{\hskip5pt}| @{\hskip 5pt}c@{\;}ccc}
				$T$ & Emp\_ID & Sup\_ID & Dept.  & Building \\
				\hline 
				s & TimX3 & JonX1 & Marketing	  & B1  \\
				t & TimX3 & AxeK4 & Sales 		  & B2 \\
				u & JonX1 & JonX1 & Production 	  & B1 \\
				v & JonX1 & AxeK4 & Distribution  & B4 \\
			\end{tabular}};
		\end{tikzpicture}\\[-.75em]
		\caption{Argumentation framework for modelling FDs in Example~\ref{intro:ex-dep}.}\label{fig:ex-dep}
	\end{figure}
\end{example}

It is easy to observe that a subset $P\subseteq T$ satisfying each $d\in D$ contains precisely those tuples $s\in T$, which are not in conflict with each other.
Clearly, such subsets correspond to the naive extensions (maximal conflict-free sets) of $\AF{F}{D}$.
Moreover, since the attack relation in $\AF{F}{D}$ is symmetric, i.e., $(s,t)\in R$  iff $(t,s) \in R$, the preferred, stable and naive extensions coincide~\cite[Prop. 4 \& 5]{Coste-MarquisDM05}.

\begin{theorem}\label{thm:ext-dep}
	Let $\mathcal D = \langle T, D \rangle$ be an instance of $\REP$ where $D$ is a set of FDs and let $\AF{F}{D}$ denote the argumentation framework generated by $\mathcal D$.
	Then for every subset $P\subseteq T$, $P \in \repairs(\calD)$  iff $P\in \sigma(\AF{F}{D})$ for $\sigma \in \{\naive, \stab, \pref\}$.
\end{theorem}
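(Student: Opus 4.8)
The plan is to prove the equivalence $P \in \repairs(\calD) \iff P \in \sigma(\AF{F}{D})$ in two stages: first establish it for $\sigma = \naive$ directly from the definitions, and then invoke the symmetry of the attack relation $R$ to conclude that $\naive$, $\pref$ and $\stab$ all coincide on $\AF{F}{D}$, which immediately lifts the equivalence to the remaining two semantics.

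\textbf{Step 1: the case $\sigma = \naive$.} The key observation is that, for any pair $s,t \in T$ and any FD $d \in D$, the pair $\{s,t\}$ violates $d$ precisely when $(s,t) \in R$ (equivalently $(t,s) \in R$), so a subset $P \subseteq T$ is conflict-free in $\AF{F}{D}$ if and only if no pair of its tuples jointly violates any $d \in D$. I would then argue that the latter is exactly the statement $P \models D$: the ``only if'' direction is immediate from the definition of satisfaction of an FD (which is a condition quantified over pairs of tuples), and the ``if'' direction is equally immediate since a violation of $\depa{x}{y}$ by $P$ is witnessed by some pair $s,t \in P$ with $s(\tuple x) = t(\tuple x)$ but $s(\tuple y) \neq t(\tuple y)$, i.e. $\{s,t\} \not\models d$, whence $(s,t) \in R$. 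Given this, ``conflict-free in $\AF{F}{D}$'' $=$ ``consistent w.r.t.\ $D$'', and taking maximal such sets on both sides yields $P \in \naive(\AF{F}{D}) \iff P$ is a maximal consistent subset of $T$ $\iff P \in \repairs(\calD)$. (Here I use that non-emptiness is imposed on both sides by convention, so the restriction causes no mismatch.)

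\textbf{Step 2: collapsing the semantics.} Since $R$ is symmetric by construction — $(s,t) \in R$ iff $(t,s) \in R$ — the AF $\AF{F}{D}$ is a symmetric argumentation framework, and for such frameworks it is known~\cite[Prop.\ 4 \& 5]{Coste-MarquisDM05} that the naive, preferred and stable extensions coincide (a maximal conflict-free set is automatically admissible, since every attacker of an argument in the set is itself attacked back by that argument, and it is stable for the same reason together with maximality). Combining this with Step 1 gives $P \in \repairs(\calD) \iff P \in \naive(\AF{F}{D}) \iff P \in \sigma(\AF{F}{D})$ for each $\sigma \in \{\naive,\stab,\pref\}$, as claimed.

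\textbf{Main obstacle.} There is no substantial obstacle here; the proof is essentially a matter of carefully unwinding the definition of $R$ against the definition of FD-satisfaction and then citing the symmetric-AF collapse. The one point requiring a little care is the handling of the non-emptiness conventions (extensions and repairs are both required non-empty in the paper), and making sure the cited equivalence of semantics for symmetric AFs applies verbatim — in particular that $\AF{F}{D}$ has no self-attacks, which holds because an FD is never violated by a single tuple, so $(s,s) \notin R$ for all $s$.
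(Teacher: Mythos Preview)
Your proposal is correct and follows essentially the same approach as the paper: establish that conflict-free sets in $\AF{F}{D}$ coincide with consistent subsets of $T$ (so maximal conflict-free $=$ naive $=$ repair), and then invoke the symmetry of $R$ via \cite[Prop.~4 \& 5]{Coste-MarquisDM05} to collapse $\naive$, $\pref$ and $\stab$. If anything, your write-up is slightly more careful than the paper's, spelling out both directions of the conflict-free/consistent equivalence and noting the absence of self-attacks.
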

\begin{proof}
	Let $\calD = \langle T, D \rangle$ be an instance of $\REP$ and $P\subseteq T$ such that $P\models d$ for each $d\in D$.
	Then, $P$ is clearly conflict-free in $\AF{F}{D}$.
	Moreover, since $P$ is a repair (and hence a maximal subset) of $T$, there is no $t\in T\setminus P$ such that $P\cup \{t\}$ is also conflict-free.
	As a result, $P$ is a naive extension in $\AF{F}{D}$.
	Finally, the same holds for preferred and stable extensions since the attack relation in $\AF{F}{D}$ is symmetric.

	Conversely, let $P\subseteq A$ be naive in $\AF{F}{D}$.
	Then, $\{s,t\}\models d$ for each $s,t\in P$ and $d\in D$ since $P$ is conflict-free.
	Moreover, $P$ is also subset maximal and therefore a repair for $\calD$. \qed
	
\end{proof}
An interesting corollary of Theorem~\ref{thm:ext-dep} reproves that a subset-repair for $\calD$ can be computed in polynomial time~\cite{DvorakDunne17}. 
Moreover we can also decide if a given tuple $s\in T$ is in some (or all) repairs, in polynomial time.
In fact, the basic properties of functional dependencies allow us to make the following observation regarding the acceptability of tuples with respect to $\calD$.
\begin{remark}\label{cor:dep-cs}
	Let $\mathcal D = \langle T, D \rangle$ be an instance of $\REP$ where $D$ is a set of FDs. 
	Then, $\somerepair(s,\calD)$ is true for  every $s\in T$, and $\allrepair(s,\calD)$ is true for a tuple $s\in T$ iff $\{s,t\} \models d$ for each $t\in T$ and $d\in D$.
\end{remark}

We conclude this section by observing that adding a size restriction for a repair renders the complexity of $\REP$ $\NP$-hard. Moreover, this already holds for propositional databases, that is, when $\dom(T)=\{0,1\}$.
The following result was proven in the context of team-semantics and maximal satisfying subteams for propositional dependence logic. 
\begin{theorem}\cite[Theorem 3.32]{Mthesis23}\label{thm:size-dep}
	There is an instance $\calD$ including a propositional database $T$ and FDs $D$, such that given $k\in \mathbb N$, the problem to decide whether there is a repair $P\subseteq T$ for $\calD$ such that $|P|\geq k$ is $\NP$-complete.
\end{theorem}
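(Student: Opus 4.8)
The plan has two parts: membership in $\NP$, which is routine, and $\NP$-hardness, whose crux is a Boolean encoding of an arbitrary graph as a conflict graph. I read the statement as concerning the decision problem whose input is a propositional instance $\calD=\langle T,D\rangle$ (so $\dom(T)=\{0,1\}$) together with $k\in\mathbb N$. For membership I would first observe that $\calD$ has a repair of size at least $k$ iff $T$ has \emph{some} subset $P$ with $|P|\ge k$ and $P\models D$: any such $P$ extends to a maximal consistent subset — that is, a repair — of size at least $|P|$, and conversely every repair of size $\ge k$ is itself a consistent subset of that size. Hence a nondeterministic algorithm can simply guess $P\subseteq T$ with $|P|\ge k$ and check $\{s,t\}\models d$ for all $s,t\in P$ and all $d\in D$ in polynomial time.

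For $\NP$-hardness I would reduce from \textsc{Independent Set}: given a graph $G=(V,E)$ and $k\in\mathbb N$, does $G$ have an independent set of size $\ge k$? The idea is to build in polynomial time a propositional instance $\calD=\langle T,D\rangle$ whose conflict graph (Section~\ref{sec:fd}) is isomorphic to $G$; Theorem~\ref{thm:ext-dep} then identifies the repairs of $\calD$ with the maximal independent sets of $G$, so $\calD$ has a repair of size $\ge k$ exactly when $G$ has an independent set of size $\ge k$. Concretely, let $T=\{\,s_v: v\in V\,\}$, and let the schema carry, for each edge $e=\{u,v\}\in E$, two Boolean attributes $a_e$ and $b_e$; set $D=\{\,a_e\to b_e : e\in E\,\}$. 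Assign values so that on the FD $a_e\to b_e$ only the endpoints of $e$ clash: put $s_u(a_e)=s_v(a_e)=0$, $s_u(b_e)=0$, $s_v(b_e)=1$, and for every other vertex $w$ put $s_w(a_e)=1$, $s_w(b_e)=0$. Then $\{s_u,s_v\}$ agrees on $a_e$ but disagrees on $b_e$ (a violation), while any pair involving some $s_w$ with $w\notin e$ either disagrees on $a_e$ or agrees on $b_e$ (no violation), so $a_e\to b_e$ contributes exactly the edge $\{u,v\}$ to the conflict graph. Taking the union over all $e\in E$, the conflict graph of $\calD$ is $G$, the domain used is $\{0,1\}$, and the construction is clearly polynomial.

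The step I expect to be the real obstacle is precisely this last encoding: since the domain is Boolean one cannot separate tuples by introducing fresh values, so the attributes $a_e$ must be made \emph{edge-local} in order to shield the non-incident tuples $s_w$ from spurious conflicts on $a_e\to b_e$ — this is what forces two attributes per edge and is the one non-obvious point to verify carefully. The remaining ingredients are routine: the reduction of the ``size-$\ge k$ repair'' question to the ``size-$\ge k$ consistent subset'' question used for membership, the polynomiality of the graph encoding, and the correspondence ``repairs $=$ naive extensions of the conflict graph'' already supplied by Theorem~\ref{thm:ext-dep}.
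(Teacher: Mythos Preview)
Your proposal is correct. The paper does not supply its own proof of this theorem (it is cited from \cite{Mthesis23}), but the intended approach---a reduction from \textsc{Independent Set} with the conflict graph engineered to coincide with the input graph---is exactly the one the authors have in mind, and your Boolean edge-gadget $(a_e,b_e)$ with the FD $a_e\to b_e$ cleanly realises it over $\dom(T)=\{0,1\}$; the membership argument via ``large consistent subset $\Leftrightarrow$ large repair'' is the standard one.
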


\subsection{Simulating Inclusion Dependencies via AFs}\label{sec:inc}

Let $\calI = \langle T,I\rangle $ be an instance of $\REP$ with a database $T$ and collection $I$ of IDs.
For $i \in  I$ (say $i= \inca{x}{y}$) and $s\in T$, let $t_1,\ldots, t_m\in T$ be such that $s(\tuple x) = t_j (\tuple y)$ for $j\leq m$.
Then we say that, each such $t_j$ \emph{supports} $s$ for the dependency $i\in I$ denoted as $\support{i}{s} \dfn \{t_1,\ldots,t_m \}$. 
Clearly, $T\models i$ if and only if $\support{i}{s}\neq\emptyset$ for each $s\in T$.
Moreover, if $\support{i}{s}=\emptyset$ for some $s\in T$ and $i \in  I$, then $s$ can not belong to a repair for $I$.
In the following, we formalize this notion and simulate the semantics for IDs via AFs.

\begin{definition}\label{def:af-inc}
	Let $\calI =\langle T,I \rangle $ be an instance of $\REP$ including a database $T$ and a collection $I$ of IDs.
	Then $\AF{F}{I}$ is the following AF.
	\begin{itemize}
		\item $A\dfn T \cup \{s_i \mid s\in T, i\in  I\}$. 
		\item $R\dfn \{(s_i,s), (s_i,s_i) \mid s\in T, i\in I \}\cup \{(t,s_i)\mid s\in T, i\in  I, t\in \support{i}{s}\}$.
	\end{itemize}
\end{definition}

Intuitively, for each $i\in I$ and tuple $s\in T$, the presence of attacks $(t,s_i)$ for each $t\in \support{i}{s}$ simulates the support relationship between $s$ and tuples in $\support{i}{s}$. 
In other words, each $t\in \support{i}{s}$ attacks $s_i$ and consequently, defends $s$ against $s_i$.
The whole idea captured in this translation is that a tuple $s\in T$ is in a repair for $\mathcal I$ 
if and only if 
for each $i\dfn \inca{x}{y}\in  I$, there is some $t\in T$ such that $s(\tuple x)= t(\tuple y)$
if and only if the argument $s\in A$ is defended against each $s_i$ in $\AF{F}{I}$.


\begin{example}\label{intro:ex-inc}
	Consider $\calI = \langle T,I \rangle $ with database $T=\{s,t,u,v\}$ and IDs $I\dfn \{{\text{Sup\_ID}\subseteq \text{Emp\_ID}}, {\text{Covers\_For}\subseteq \text{Dept}}\}$.
	For brevity, we denote IDs by $I= \{1,2\}$.
	The database and the supporting tuples $\support{i}{w}$ for each $i\in I, w\in T$ are depicted in the table inside Figure~\ref{fig:ex-inc}.
	Informally, a supervisor is also an employee and each employee 
	is assigned a department to cover if that department is short on employees.
	For example, $s(\text{Sup\_ID})=t(\text{Emp\_ID})$, $s(\text{Covers\_For})=t(\text{Dept})=u(\text{Dept})$, and therefore $\support{1}{s}=\{t\}$, $\support{2}{s}=\{t,u\}$.
	Then we have the AF $\AF{F}{I}$ as depicted in Figure \ref{fig:ex-inc}.
	The AF $\AF{F}{I}$ has a unique preferred extension, given by $\{s,t\}$. 
	Clearly, this is also the only repair for $\calI$.
	\begin{figure}
		\centering
		\begin{tikzpicture}[scale=.8, arg/.style={circle,draw=black,fill=white,inner sep=.75mm}]
			\node[arg] (s) at (-.8,2) {$s$};
			\node[arg] (s1) at (-2,2) {$s_1$};
			\node[arg] (s2) at (-1,3) {$s_2$};
			\node[arg] (t) at (-.8,0) {$t$};
			\node[arg] (t1) at (-2,0) {$t_1$};
			\node[arg] (t2) at (-1,-1) {$t_2$};
			\node[arg] (u) at (1.7,1.8) {$u$};
			\node[arg] (u1) at (3,2) {$u_1$};
			\node[arg] (u2) at (2,3) {$u_2$};
			\node[arg] (v) at (1.5,0) {$v$};
			\node[arg] (v1) at (3,0) {$v_1$};
			\node[arg] (v2) at (2,-1) {$v_2$};
		\node (table) at (-1,5.5) {
		\begin{tabular}{l@{\hskip 5pt}|@{\hskip5pt}c@{\;}ccc@{\hskip5pt}}
	$T$ & Emp\_ID  & Sup\_ID & Dept. & Covers\_For\\ \hline
	s 	& JonX1  &  AxeK4 		&  Production 		& Marketing \\
	t 	& AxeK4  &  AxeK4 		&  Marketing  		& Production \\
	u 	& TimX3  &  JonX1 		&  Marketing  		& Distribution \\
	v 	& JonX1  &  AxeK4 		&  Distribution	 	& R\&D \\
		\end{tabular}};
		\node (table) at (6,5.5) {
			\begin{tabular}{cc}
				$S_{1}$ & $S_{2}$\\ \hline
				t & t,u   \\
				t & s  \\
				s,v & v  \\
				t & - \\
			\end{tabular}
		};

			\foreach \f/\t in {s1/s,s2/s,t1/t,t2/t,u1/u,u2/u, v1/v,v2/v}{
				\path[-stealth',draw=blue] (\f) edge (\t);
			}
			\foreach \f/\t in {u1/u1,v1/v1}{
				\path[-stealth',draw=red] (\f) edge[loop right] (\t);
			}
			\foreach \f/\t in {s1/s1,s2/s2,t1/t1,t2/t2,u2/u2,v2/v2}{
				\path[-stealth',draw=red] (\f) edge[loop left] (\t);
			}
			\foreach \f/\t in {t/s1,t/s2,u/s2, t/t1, s/t2, s/u1, v/u1, v/u2, t/v1}{
				\path[-stealth',draw=black] (\f) edge[bend left] (\t);
			}
			
		\end{tikzpicture}\\[-.75em]
		\caption{The AF $\AF{F}{I}$ modelling $\calI$ in Example~\ref{intro:ex-inc}: the red self-loops together with blue arcs depict the attacks for each tuple $s\in T$ due to IDs $i\in I$ and the black arcs model the attacks due to the support set $\support{i}{s}$.}\label{fig:ex-inc}
	\end{figure}
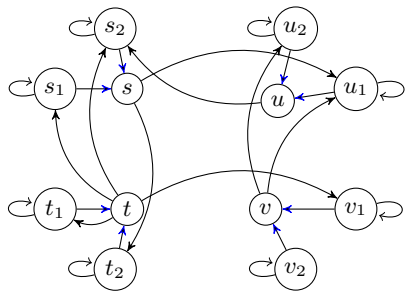
\end{example}

It is worth mentioning that, $\{s,t,u,v\}$ constitutes a naive extension for $\AF{F}{I}$ in Example~\ref{intro:ex-inc}, although this is not a repair for $\mathcal I$.
Clearly, the semantics for IDs in $\AF{F}{I}$ requires admissibility (defending against attacking arguments).
%
We now prove that the repairs for $\calI$ are precisely the preferred extension in $\AF{F}{I}$.

\begin{theorem}\label{thm:ext-inc}
	Let $\mathcal I = \langle T, I \rangle$ be an instance of $\REP$ where $I$ is a set of IDs and let $\AF{F}{I}$ denote the argumentation framework generated by $\mathcal I$.
	Then for every subset $P\subseteq T$, $P \in \repairs(\mathcal I)$ iff $P\in \pref(\AF{F}{I})$.
\end{theorem}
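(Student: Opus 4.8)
The plan is to strip away the auxiliary arguments, confine everything to subsets of $T$, and then observe that on such subsets the admissibility condition of $\AF{F}{I}$ is verbatim the definition of consistency with $I$; since preferred extensions are by definition the maximal admissible sets and repairs are the maximal consistent subsets, the theorem drops out.

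First I would record two structural facts about $\AF{F}{I}$. (i) Each auxiliary argument $s_i$ carries a self-attack $(s_i,s_i)\in R$, so no conflict-free set can contain any $s_i$; in particular every admissible set, hence every preferred extension, is a subset of $T$. (ii) Every edge of $R$ has an auxiliary argument as its head, so $R$ contains no pair of tuples; consequently \emph{every} $S\subseteq T$ is conflict-free in $\AF{F}{I}$, and for subsets of $T$ being admissible is just being self-defending.

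Next I would characterize admissibility on subsets of $T$. Fix $S\subseteq T$ and $s\in S$: the attackers of $s$ in $\AF{F}{I}$ are exactly the arguments $s_i$, $i\in I$, and for $S$ to defend $s$ against $s_i$ we need some element of $S$ attacking $s_i$; the attackers of $s_i$ are $s_i$ itself (excluded from $S$ by (i)) together with the tuples in $\support{i}{s}$. Hence $S$ is admissible in $\AF{F}{I}$ iff $S\cap\support{i}{s}\neq\emptyset$ for every $s\in S$ and every $i\in I$. Writing $i=\inca{x}{y}$, a tuple $t\in S$ lies in $\support{i}{s}$ precisely when $s(\tuple x)=t(\tuple y)$ — here I would note that $\support{i}{s}$ is taken relative to $T$ and $S\subseteq T$, so whether a tuple of $S$ belongs to $\support{i}{s}$ is unaffected by passing to the subset. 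Therefore ``$S\cap\support{i}{s}\neq\emptyset$ for all $s\in S$, $i\in I$'' is exactly ``$S\models i$ for all $i\in I$'', i.e. $S$ is consistent with $I$. So the admissible sets of $\AF{F}{I}$ are precisely the subsets of $T$ consistent with $I$.

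Finally, taking inclusion-maxima on both sides gives: $P\in\pref(\AF{F}{I})$ iff $P$ is a maximal admissible set iff $P$ is a maximal subset of $T$ consistent with $I$, which is exactly $P\in\repairs(\mathcal I)$. The one point needing care is the non-emptiness convention: $\emptyset$ is vacuously admissible, and it is a preferred extension precisely when it is the \emph{only} admissible set, which by the previous paragraph is precisely when $T$ has no non-empty consistent subset, i.e. when $\repairs(\mathcal I)=\emptyset$; in that degenerate case both $\pref(\AF{F}{I})$ and $\repairs(\mathcal I)$ are empty, and in every other case all preferred extensions are non-empty, matching the repairs. I expect the only mildly delicate part to be this bookkeeping, together with being explicit that $\support{i}{s}$ is defined over $T$ rather than over the candidate subset $S$; the argumentation-theoretic content is immediate once the self-loops are used to rule out auxiliary arguments from extensions.
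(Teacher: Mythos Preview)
Your proposal is correct and takes essentially the same approach as the paper: both use the self-loops to confine extensions to subsets of $T$, note that such subsets are automatically conflict-free, identify admissibility with consistency under $I$, and then pass to maxima (you do this once via an explicit equivalence ``admissible $\Leftrightarrow$ consistent'' and take maximal elements on both sides, whereas the paper argues the two implications separately, but the content is identical). One minor slip: not every edge of $R$ has an auxiliary argument as its \emph{head}---the edge $(s_i,s)$ has head $s\in T$---but every edge has an auxiliary argument as an \emph{endpoint}, which is what your argument actually uses.
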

\begin{proof}
	We first prove the reverse direction.
	Let $P\subseteq A$ be a preferred extension in $\AF{F}{I}$, then $P$ must not contain any auxiliary argument $s_i$ corresponding to some ID $i\in I$ since $P$ is conflict-free.
	This implies that $P\subseteq T$, which together with that fact $P$ is admissible (every $s\in P$ is defended against each $s_i\in A$) and maximal under set inclusion yields the proof of the claim.
	
	Conversely, let $P\subseteq T$ denote a repair for $\calI = \langle T, I \rangle$. 
	Then 
	$P$ is conflict-free in $\AF{F}{I}$ since the attacks in $R$ contain at least one argument among the auxiliary arguments ($s_i$) which are not in $P$ (as $P\subseteq T$).
	Moreover, for each $s\in P$ and $i\dfn \inca{x}{y}\in I$, there is some $t\in P$ such that: $s(\tuple x)=t(\tuple y)$.
	This implies that each $s\in P$ is defended against the attack $s_i\in A$. 
	Consequently, $P$ is admissible. 
	To prove that $P$ is also preferred in $\AF{F}{I}$, assume to the contrary that there is an admissible $P'\supset P$ in $\AF{F}{I}$.
	Since $P'$ is also conflict-free, using the same argument as for $P$ we notice that $P'\subseteq T$.
	Now, $P'$ being preferred (together with the claim in reverse direction) 
	implies that $P'$ is a repair for $\calI$ contradicting the fact that $P$ is a subset-maximal repair for $\calI$.
	As a consequence, $P$ is preferred in $\AF{F}{I}$.

	This proves the correctness of our theorem. \qed
\end{proof}

Notice that a framework $\AF{F}{I}$ may not have stable extensions for certain instances $\calI$ including databases $T$ and IDs $I$.
This holds because some arguments can neither be accepted in an extension (e.g., when $\support{i}{s}=\emptyset$ for some $s\in T$ and $i\in I$), nor attacked by arguments in an extension (since arguments in $A$ only attack auxiliary arguments). 
The argument corresponding to the tuple $v$ in Example~\ref{intro:ex-inc} depicts such an argument.
As a result, the stable and preferred extensions do not coincide in general.
Nevertheless, we prove that after a pre-processing, naive, stable and preferred extensions still coincide. 

\paragraph{A pre-processing algorithm for $\AF{F}{I}$.}
Observe that an undefended argument in an AF $\calF$ can not belong to any preferred extension of $\calF$.
The intuition behind pre-processing is to remove such arguments, which are not defended against some of their attacks in $\AF{F}{I}$.
This corresponds to (recursively) removing those tuples in $s\in T$, for which $\support{i}{s}=\emptyset$ for some $i\in I$.
The pre-processing (denoted $\pre(\AF{F}{I})$) applies the following procedure as long as possible. 
\begin{description}
	\item[*] For each $s_i\in A$ such that $s_i$ is not attacked by any $t\neq s_i$: remove $s$ and $s_j$ for each $j\in I$, as well as each attack to and from $s$ and $s_j$. 
\end{description}
We repeat this procedure until convergence. 
Once a fixed point has been reached, the remaining arguments in $A$ are all defended. 
Interestingly, after the pre-processing, removing the arguments with self-loops results in a unique naive extension which is also stable and preferred.
In the following, we also denote by $\pre(\AF{F}{I})$ the AF obtained after applying the pre-processing on $\AF{F}{I}$.
Notice that $\pre$ is basically an adaptation to the AFs of the well-known algorithm for finding a maximal satisfying subteam for inclusion logic formulas~\cite[Lem.~12]{HannulaH22}.

\begin{lemma}\label{lem:pre}
	Let $\mathcal I$ be an instance of $\REP$ and $\AF{F}{I}$ denote the argumentation framework generated by $\mathcal I$.
	Then $\pre(\AF{F}{I})$  can be computed from $\AF{F}{I}$ in polynomial time.
	Moreover, $\pre(\AF{F}{I})$ has a unique naive extension which is also stable and preferred. 
\end{lemma}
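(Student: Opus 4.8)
\noindent\emph{Proof plan.}
The plan is to identify $\pre(\AF{F}{I})$ explicitly as a fixed point and then read off all the claims from its shape. For the running time, observe that each application of the removal rule deletes at least one tuple $s\in T$ together with its $|I|$ auxiliary arguments $s_j$ ($j\in I$), so the rule can fire at most $|T|$ times before convergence; since a single firing only needs to locate an auxiliary argument $s_i$ whose sole remaining attacker is itself and then delete a bounded number of arguments and their incident attacks, the whole procedure runs in polynomial time. Write $T^{*}\subseteq T$ for the set of tuples that survive the pre-processing. Because the rule always removes $s$ and the whole family $\{s_j\mid j\in I\}$ simultaneously and never adds any attack, the argument set of $\pre(\AF{F}{I})$ is exactly $T^{*}\cup\{s_i\mid s\in T^{*},\ i\in I\}$ with $R$ inherited by restriction; in particular every surviving $s_i$ still carries its self-loop from Definition~\ref{def:af-inc}. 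The invariant guaranteed at convergence is that every surviving $s_i$ has an attacker $t\neq s_i$; by Definition~\ref{def:af-inc} this $t$ must lie in $\support{i}{s}$ and must itself have survived, so
\[
\support{i}{s}\cap T^{*}\neq\emptyset \qquad\text{for all } s\in T^{*},\ i\in I.
\]

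From this invariant everything follows by unfolding the definitions. Every attack of $\AF{F}{I}$ --- in each of the three shapes $(s_i,s)$, $(s_i,s_i)$, $(t,s_i)$ of Definition~\ref{def:af-inc} --- is incident to an auxiliary argument, so no two tuples attack one another; hence $T^{*}$ is conflict-free, while conversely every conflict-free set is contained in $T^{*}$, because no $s_i$ (being self-attacking) can lie in one. Thus $T^{*}$ is the unique maximal conflict-free set, i.e.\ the unique naive extension. For admissibility, the attackers of a tuple $s\in T^{*}$ are precisely the arguments $s_i$ ($i\in I$), and each is counter-attacked by some $t\in\support{i}{s}\cap T^{*}$, so $T^{*}$ defends all of its members. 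For preferredness, any $S\supset T^{*}$ with $S\neq T^{*}$ must contain some $s_i$, which is self-attacking, so $S$ fails to be conflict-free and a fortiori admissible; hence $T^{*}$ is a maximal admissible set. For stability, every argument outside $T^{*}$ is of the form $s_i$ with $s\in T^{*}$ and is attacked by some tuple of $\support{i}{s}\cap T^{*}$. So the unique naive extension $T^{*}$ is simultaneously stable and preferred.

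Finally I would dispose of the degenerate case $T^{*}=\emptyset$: then $\pre(\AF{F}{I})$ has no arguments at all, and the statement is either vacuous or witnessed by $\emptyset$, depending on the convention about the empty extension recalled in the footnote of Section~\ref{sec:preli}. I expect the only place requiring genuine care to be the first paragraph: one must phrase the fixed-point invariant $\support{i}{s}\cap T^{*}\neq\emptyset$ precisely and argue that the recursive deletions cannot violate it, so that the admissibility and stability arguments never need to refer back to the original table $T$. Everything afterwards is routine.
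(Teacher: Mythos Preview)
Your proposal is correct and follows essentially the same line as the paper's own proof: both identify the set of surviving non-self-attacking arguments (your $T^{*}$, the paper's $S$) as the unique naive extension, then verify admissibility via the fixed-point invariant $\support{i}{s}\cap T^{*}\neq\emptyset$, and obtain stability from the fact that every remaining auxiliary $s_i$ is attacked by some such $t$. Your write-up is in fact more explicit than the paper's---you spell out the invariant, the shape of the surviving argument set, and the degenerate case $T^{*}=\emptyset$, whereas the paper leaves these implicit and appeals loosely to Theorem~\ref{thm:ext-inc} at the end.
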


\begin{proof}
	The procedure $\pre(\AF{F}{I})$ removes recursively all the arguments corresponding to assignments $s$ such that $\support{i}{s}=\emptyset$ for some $i\in I$.
	Notice that $\support{i}{s}$ can be computed for each $i\in I$ and $s\in T$ in polynomial time.
	Then, $\pre$ stores in a data structure (such as a queue) all the arguments $s$ for which $\support{i}{s}=\emptyset$.
	Finally, each argument $s$ in this queue can be processed turn by turn, adding possibly new arguments when $\pre$ triggers the removal of certain arguments from $A$ and hence from $\support{i}{t}$ for some $t\in A$.
	A fixed-point is reached when every element in the queue has been processed, this gives the size of $A$ as the total number of iterations.  
	Consequently, $\pre$ runs in polynomial time in the size of $\AF{F}{I}$.
	
	Let $\pre(\AF{F}{I}) = (A',R')$ denote the AF generated by the pre-processing.
	To prove the equivalence between extensions, notice that 
	the set of arguments ($S$) without self-attacks in $A'$ form an  
	admissible extension 
	since $\support{i}{s}\neq \emptyset $ for every $s\in S$.
	Further, since $A'\setminus S$ only includes auxiliary arguments, those are all attacked by $S$ and therefore $S$ is stable.
	Finally, $S$ is the only naive extension in the reduced AF since $S$ is the maximal and conflict-free in $\pre(\AF{F}{I})$ and arguments in  $A'\setminus S$ 
	contain self-attacks.
	
	This establishes the correctness of the lemma together with Theorem~\ref{thm:ext-inc}. \qed
\end{proof}

The following observation follows from the proof of Lemma~\ref{lem:pre}.
Intuitively, we can also determine $\somerepair(s,\calI)$ and $\allrepair(s,\calI)$ for each $s\in T$, once the pre-processing has terminated resulting in $\pre(\AF{F}{I})$.
\begin{remark}\label{cor:inc-cs}
	Let $\mathcal I$ be an instance of $\REP$ and $\AF{F}{I}$ denote the argumentation framework generated by $\mathcal I$.
	Then, $\somerepair(s,\calI)$ and $\allrepair(s,\calI)$ is true for every $s\in T$ such that $s\in \pre(\AF{F}{I})$.
\end{remark}
%

\begin{example}[Continue]
	Reconsider the instance $\calI= \langle T, I\rangle $ from Example~\ref{intro:ex-inc}.
	Observe that the argument $v$ is not defended against $v_2$ and therefore can not be in a repair. 
	Then the pre-processing removes $\{v,v_1,v_2\}$ and all the edges to/from arguments in this set.
	This has the consequence that all the arguments which are only defended by $v$ are no longer defended (e.g., $u$).
	Consequently, the arguments $\{u,u_1,u_2\}$ have to be removed as well. 
	After repeating the process for $u$, we notice that no other argument needs to be removed. 
	Hence, the set $\{s,t\}$ is a repair for $\calI$ as well as a $\sigma$-extension in the reduced AF for $\sigma\in\{\naive,\stab,\pref\}$.
\end{example}

\subsection{Simulating Functional and Inclusion Dependencies via AFs}\label{sec:both}

Consider an instance $\calB= \langle T,B \rangle$ with a database $T$ such that $B= D\cup I$ includes functional ($D$) and inclusion ($I$) dependencies.
We apply the pre-processing as a first step, thereby, removing those tuples from $T$ failing some $i\in I$.
In other words, we remove (recursively) all the tuples $s$ from $T$, such that, there is some $i\in I$ with $\support{i}{s}=\emptyset$.
Then, the framework generated by $\calB$ is $\AF{F}{B}\dfn (A,R_{\text D}\cup R_{\text I})$, specified as below.
%
	\begin{itemize}
		\item $A\dfn T \cup \{s_i \mid s\in T, i\in  I\}$. 
		\item $R_{\text D}\dfn \{(s,t),(t,s) \mid \text{ there is some } d\in  D, \text{ s.t. } \{s,t\}\not\models d\}$.
		\item $R_{\text I}\dfn \{(s_i,s), (s_i,s_i) \mid s\in T, i\in I \}\cup \{(t,s_i)\mid s\in T, i\in  I, t\in \support{i}{s}\}$.
		
	\end{itemize}

Interestingly, even if we apply pre-processing, some tuples may not be accepted in combination with each other, as depicted in the following example.

\begin{example}\label{intro:ex-both}
	Consider $\calB = \langle T,B \rangle$ with database $T=\{s,t,u\}$ and atoms $B= D\cup I$ where $D= \{\depas{\text{Sup\_ID}}{\text{Building}}\}$ and $I=\{{\text{Covers\_For}\subseteq \text{Dept}}\}$.
	Moreover, the database $T$ and the support $\support{\incas{\text{Covers\_For}}{\text{Dept}}}{w}$ for each $w\in T$ is depicted in the table inside Figure~\ref{fig:ex-both}.
	Then,  $\{s,t\}\not\models \depas{\text{Sup\_ID}}{\text{Building}}$, and $\{t,u\}\not\models \depas{\text{Sup\_ID}}{\text{Building}}$.
	The resulting AF $\AF{F}{B}$ is shown in Figure \ref{fig:ex-both}, where the edges due to the IDs are depicted in red and blue. 
	Then, the only preferred extensions for $\AF{F}{B}$ is  $\{t\}$.
	Also, the only repair for $\calB$ is $\{t\}$.
	Further, although $\{s,u\}$ is preferred for $\AF{F}{D}$ where $\calD =\langle T,D\rangle$ (ignoring red and blue arcs), and $\{s,t, u\}$ is preferred for $\AF{F}{I}$ where $\calI =\langle T,I\rangle$ (ignoring black arcs), none of them is preferred for $\AF{F}{B}$.
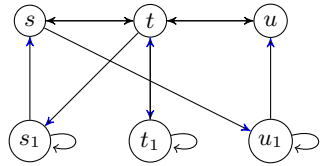
\begin{figure}
	\centering
	\begin{tikzpicture}[scale=.8,arg/.style={circle,draw=black,fill=white,inner sep=.75mm}]
		\node[arg] (s) at (-3,2) {$s$};
			\node[arg] (s1) at (-3,0) {$s_1$};			
		\node[arg] (t) at (-1,2) {$t$};
			\node[arg] (t1) at (-1,0) {$t_1$};
		\node[arg] (u) at (1,2) {$u$};
			\node[arg] (u1) at (1,0) {$u_1$};
	\node (table) at (-3.5,3.5) {
		\begin{tabular}{l@{\hskip 5pt}|@{\hskip5pt}c@{\;}cccc}
	$T$ & Emp\_ID  & Sup\_ID & Dept. & Building & Covers\_for \\
			\hline 
	s & JonX1 & AxeK4 & Production	& B4  & Sales \\
	t & TimX3 & AxeK4 & Sales  			& B2  & Sales \\
	u & AxeK4 & AxeK4 & Marketing  		& B4  & Production\\
		\end{tabular}};
	\node (table) at (3.5,3.5) {
	\begin{tabular}{c}
		$S_i$ \\ \hline
		t \\
		t \\
		s\\
	\end{tabular}
};

	\foreach \f/\t in {s/t,t/s,t/u,u/t}{
			\path[-stealth',draw=black] (\f) edge (\t);
	}
	\foreach \f/\t in {t/s1,t/t1,s/u1,s1/s,t1/t,u1/u}{
		\path[-stealth', draw=blue] (\f) edge (\t);
	}
	\foreach \f/\t in {s1/s1,t1/t1,u1/u1}{
	\path[-stealth',draw=red] (\f) edge[loop right] (\t);
}
	\end{tikzpicture}\\[-.75em]
	\caption{Argumentation framework for modelling dependencies in Example~\ref{intro:ex-both}. Black arcs depict conflicts due to functional, and blue ones due to inclusion dependency}\label{fig:ex-both}
\end{figure}
\end{example}
One consequence of allowing both (functional and inclusion) dependencies is that the preferred and naive extensions do not coincide in general. Moreover, both $\somerepair$ and $\allrepair$ are non-trivial and distinct (cf. Rem.~\ref{cor:dep-cs} and \ref{cor:inc-cs})

\begin{example}[Cont.]
	Reconsider the instance $\calB$ from Example~\ref{intro:ex-both}.
	Then, $\{s,u\}$ is a naive extension for $\AF{F}{B}$ but not preferred.
	Moreover, $t$ is the only tuple for which $\somerepair(t,\calB)$ and $\allrepair(t,\calB)$ is true. 
\end{example}

As the preceding examples demonstrate, in the presence of both types of dependencies, the repairs correspond to somewhat costly (that is, preferred) semantics for AFs.

\begin{theorem}\label{thm:ext-both}
	Let $\mathcal B = \langle T, B \rangle$ be an instance of $\REP$ where $B$ includes FDs and IDs.
	Further, let $\AF{F}{B}$ denote the argumentation framework generated by $\mathcal B$.
	Then for every subset $P\subseteq T$, $P \in \repairs(\mathcal B)$ iff $P \in \pref(\AF{F}{B})$.
\end{theorem}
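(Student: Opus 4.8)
The plan is to mimic the proof of Theorem~\ref{thm:ext-inc}, adapting it to account for the additional symmetric attacks $R_{\text D}$ coming from the functional dependencies, and to use the fact that pre-processing has already removed every tuple $s$ with $\support{i}{s}=\emptyset$ for some $i\in I$, so that in $\AF{F}{B}$ every argument $s\in T$ is defended against each of its auxiliary attackers $s_i$ by some surviving tuple.

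\medskip
\noindent\textbf{Forward direction ($\Leftarrow$).}
First I would take a preferred extension $P\subseteq A$ of $\AF{F}{B}$. Since $P$ is conflict-free and every auxiliary argument $s_i$ carries a self-attack $(s_i,s_i)\in R_{\text I}$, no $s_i$ lies in $P$, hence $P\subseteq T$. Conflict-freeness with respect to $R_{\text D}$ then gives $\{s,t\}\models d$ for all $s,t\in P$ and $d\in D$, so $P\models D$. For the inclusion dependencies, admissibility of $P$ means each $s\in P$ is defended against its attacker $s_i$; the only arguments attacking $s_i$ are the tuples in $\support{i}{s}$, so some $t\in P\cap\support{i}{s}$ exists, i.e. $s(\tuple x)=t(\tuple y)$ for $i=\inca{x}{y}$. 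Thus $P\models I$ and $P$ is consistent with respect to $B$. Maximality of $P$ as a repair then follows from $\subseteq$-maximality among admissible sets: if some consistent $P'$ with $P\subsetneq P'\subseteq T$ existed, the converse direction (below) would show $P'$ is admissible in $\AF{F}{B}$, contradicting that $P$ is preferred. (Here I must be a little careful: I should argue that the larger consistent set $P'$ is admissible directly, rather than circularly invoking "preferred".)

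\medskip
\noindent\textbf{Backward direction ($\Rightarrow$).}
Conversely, let $P\subseteq T$ be a repair for $\calB$. Since $P\subseteq T$ and every attack in $R_{\text I}$ has an auxiliary endpoint, while every attack in $R_{\text D}$ between two tuples of $P$ would contradict $P\models D$, the set $P$ is conflict-free in $\AF{F}{B}$. For admissibility, each $s\in P$ has attackers only of two kinds: tuples $t$ with $\{s,t\}\not\models d$ for some $d\in D$, and auxiliary arguments $s_i$. Against $s_i$: since $P\models i$ there is $t\in P$ with $s(\tuple x)=t(\tuple y)$, and this $t$ attacks $s_i$, so $P$ defends $s$ there. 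Against a conflicting tuple $t$: here the symmetry of $R_{\text D}$ is the key — any such $t$ attacks $s$, but $s\in P$ attacks $t$ back, so $P$ defends $s$ against $t$ using $s$ itself. Hence $P$ is admissible. Finally, to see $P$ is preferred, suppose $P'\supsetneq P$ were admissible; as in the forward direction $P'\subseteq T$ and the admissibility argument run in reverse shows $P'\models D$ (conflict-freeness) and $P'\models I$ (defense against each $s_i$ forces a supporting tuple inside $P'$), so $P'$ is consistent, contradicting subset-maximality of the repair $P$.

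\medskip
\noindent\textbf{Main obstacle.}
The delicate point, and the one I would write out most carefully, is the two-way passage between "$P'\subseteq T$ is admissible in $\AF{F}{B}$" and "$P'$ is consistent with respect to $B$" — especially verifying that admissibility against the auxiliary attackers $s_i$ is \emph{equivalent} to $P'\models I$ rather than merely implied by it, and that the pre-processing step is exactly what guarantees there are no "orphan" tuples left whose only defense would come from already-deleted arguments. Once that equivalence is nailed down, maximal admissible sets and maximal consistent subsets of $T$ literally coincide, and the theorem follows. Everything else (self-attacks killing the $s_i$, symmetry of $R_{\text D}$ handling FD-conflicts) is routine and parallels Theorems~\ref{thm:ext-dep} and~\ref{thm:ext-inc}.
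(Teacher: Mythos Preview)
Your proposal is correct and follows essentially the same approach as the paper's (very terse) proof, which simply invokes Theorems~\ref{thm:ext-dep} and~\ref{thm:ext-inc}; you have spelled out precisely the details those references stand for, including the key use of the symmetry of $R_{\text D}$ for self-defense against FD-attackers. One small remark: pre-processing plays no role in the core equivalence you establish (for $P'\subseteq T$, conflict-free and admissible in $\AF{F}{B}$ $\Leftrightarrow$ $P'\models B$), since a tuple with $\support{i}{s}=\emptyset$ fails both sides anyway---so your ``main obstacle'' is not really an obstacle, and the argument goes through verbatim whether or not pre-processing has been applied.
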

\begin{proof}
	The correctness follows from the proof of Theorem~\ref{thm:ext-dep} and~\ref{thm:ext-inc}.
	The conflict-freeness and admissibility of $P$ implies that each FD and ID in $B$, respectively, is true in $P$.
	The converse follows the same line of argument. 
	Finally, the maximality of repairs in $\calB$ corresponds to the maximality of extensions in $\AF{F}{B}$.\qed
\end{proof}

The existence of a non-empty extension and the credulous reasoning for preferred semantics are both $\NP$-complete, 
and skeptical reasoning is even harder and $\PiP$-complete. 
Next we establish that when both types of dependencies are allowed, $\REP, \somerepair$ and $\allrepair$ actually have the same complexity as, respectively,  
the existence, credulous and skeptical reasoning for preferred semantics.

\begin{theorem}\label{thm:cred-both}
	Let $\mathcal B = \langle T, B \rangle$ be an instance of $\REP$ where $B$ is a set of FDs and IDs.
	Further, let $s\in T$ be a tuple.
	Then, the problems 
	$\REP$ and $\somerepair(s,\calB)$ are both 
	\NP-complete.
\end{theorem}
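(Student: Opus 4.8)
The plan is to establish $\NP$-membership for $\somerepair(s,\calB)$ and then derive $\NP$-membership for $\REP$. By Theorem~\ref{thm:ext-both}, $\somerepair(s,\calB)$ holds iff the argument $s$ is credulously accepted under preferred semantics in $\AF{F}{B}$. Since $\cred_{\pref}$ is in $\NP$ (as recalled in Section~\ref{sec:preli}) and $\AF{F}{B}$ is constructible from $\calB$ in polynomial time, membership follows. For $\REP$ we observe that $\repairs(\calB)\neq\emptyset$ iff some tuple $s\in T$ survives the pre-processing (i.e.\ lies in $\pre(\AF{F}{B})$) and is credulously accepted; more directly, one guesses a subset $P\subseteq T$, verifies in polynomial time that $P\models b$ for every $b\in B$, and checks maximality is not required for mere existence of \emph{a} repair once one notes any consistent nonempty set extends to a repair --- so $\REP$ reduces to guessing a nonempty consistent $P$, which is plainly in $\NP$.

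\textbf{Hardness.} The main work is the $\NP$-hardness, which I would obtain by a reduction from a known $\NP$-complete problem; the natural source is $\cred_{\pref}$ (existence of a nonempty preferred extension, or credulous acceptance), or alternatively a direct reduction from a combinatorial problem such as \textsc{3SAT} or \textsc{Independent Set}, routed through the ID-gadget. The idea is to encode an arbitrary AF $\calF=(A,R)$ into an instance $\calB=\langle T,B\rangle$ so that preferred extensions of $\calF$ correspond to repairs of $\calB$. One would need FDs to simulate the (symmetric closure of the) attack relation as conflicts --- exactly as in $\AF{F}{D}$ of Section~\ref{sec:fd} --- and IDs to simulate the \emph{defense} requirement that makes preferred (rather than naive) the right semantics; the auxiliary-argument construction of Definition~\ref{def:af-inc} must be realized \emph{inside} a database, meaning the arguments $s_i$ and the support attacks $(t,s_i)$ have to be expressed through attribute values and inclusion dependencies. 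Concretely, for each argument $a$ one creates a tuple $t_a$, and for each attacker $b$ of $a$ one needs a tuple (or gadget tuples) whose presence in a repair forces some defender of $a$ to be present, which is precisely what an inclusion dependency $\inca{x}{y}$ buys. One then shows $P\mapsto\{t_a : a\in P\}$ is a bijection between preferred extensions of $\calF$ and repairs of $\calB$, and that $a$ is credulously accepted in $\calF$ iff $\somerepair(t_a,\calB)$; for $\REP$ one instead reduces from nonemptiness of preferred extensions (or notes $\calF$ always has a preferred extension and adds a gadget forcing it nonempty).

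\textbf{The main obstacle.} The delicate point is the hardness reduction: faithfully realizing the AF $\AF{F}{B}$-style construction \emph{as a genuine database instance with FDs and IDs over a bounded schema}. Unlike the forward direction (instance $\to$ AF), here we must go backward, and inclusion dependencies are global constraints over columns, so one cannot simply ``attach'' a bespoke support set to each tuple --- the ID $\inca{x}{y}$ says every $\tuple x$-value appears as a $\tuple y$-value \emph{somewhere} in the table. The reduction must therefore choose attribute values cleverly (e.g.\ fresh identifiers indexed by arguments and attack-pairs) so that, for each tuple $t_a$ and each ID, the set of tuples that can serve as its satisfying tuple is exactly the set of defenders of $a$ against a given attacker --- no more, no less --- and so that the pre-processing step does not collapse the instance prematurely. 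Verifying that the claimed bijection is exact (in particular that spurious repairs containing ``half-built'' gadgets are excluded, typically via a self-attack analogue, i.e.\ an FD a tuple violates with itself being impossible, so one uses a pair of mutually-conflicting gadget tuples instead) is where the careful bookkeeping lies; everything else is routine.
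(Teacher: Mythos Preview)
Your membership argument is fine; the paper gives essentially the same direct guess-and-check (noting that maximality need not be verified), without the detour through Theorem~\ref{thm:ext-both}.

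For hardness, however, your plan diverges substantially from the paper and contains a real gap. You propose to reduce from $\cred_{\pref}$ (or nonempty preferred existence) by encoding an \emph{arbitrary} AF as a database instance with FDs and IDs --- i.e.\ inverting the construction of Section~\ref{sec:both}. But the AFs of the form $\AF{F}{B}$ are highly structured (symmetric conflicts from FDs, auxiliary self-attacking nodes from IDs), and it is not obvious that every AF embeds into one of this shape with preferred extensions preserved. Indeed, the paper itself flags exactly this reverse direction as \emph{future work} in its concluding section; your ``main obstacle'' paragraph correctly identifies the difficulty but does not resolve it. As written, then, your proposal sketches a reduction that might work but is not shown to.

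The paper instead gives a direct reduction from $\SAT$. Given $\varphi=\{C_1,\ldots,C_m\}$ over propositions $p_1,\ldots,p_n$, the database has one tuple $s_\varphi$ together with a pair $s_j,\bar s_j$ per variable; a single FD $\depas{t_0}{u_0}$ forces at most one of each pair into any repair (encoding a truth assignment), and one ID $\incas{t_i}{u_i}$ per clause is set up so that the value $s_\varphi(t_i)$ is matched only by those tuples $s_j$ (resp.\ $\bar s_j$) whose literal satisfies $C_i$. Hence $s_\varphi$ lies in a repair iff $\varphi$ is satisfiable, giving hardness of $\somerepair$; one further ID then makes $s_\varphi$ mandatory in every nonempty consistent subset, yielding hardness of $\REP$. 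This is concrete and sidesteps the backward-encoding obstacle entirely.
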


\begin{proof}
	The membership is easy in both cases. 
	Given $P\subseteq T$ such that $s\in P$ (resp., $P$ is non-empty), one can decide in polynomial time whether $P\models B$. 
	Notice that we do not need to check the maximality, since if there is a repair (non-empty) for $\calB$ containing $s$ then there is also a maximal (non-empty) repair containing $s$.  
	
	For hardness, we first reduce from $\SAT$ to $\somerepair(s,\calB)$. 
	Towards the end, we highlight the required changes to reduce $\SAT$ to $\REP$.
	Let $\varphi \dfn \{C_i\mid {i\leq m}\}$ be a formula over propositions $\{p_1,\ldots, p_n\}$.
	Then, we construct a database $T$ and a collection $B$ of FDs and IDs over a set $V \dfn \{t_i,u_i \mid 0\leq i \leq m\}$ of attributes.
	Our encoding works as follows.
	\begin{itemize}
		\item $B$ contains a single FD $\depas{t_0}{u_0}$ to encode that each proposition takes at most one value in $\{0,1\}$. Moreover, $B$ contains $m$ inclusion dependencies $\incas{t_i}{u_i}$ for each $1\leq i\leq m$ to assure that each clause $C_i$ is satisfied.
		\item  The database $T\dfn \{s_\varphi\}\cup\{s_j,  \bar s_j \mid 1\leq j\leq n\}$ is constructed in such a way that: (C1) the pair $s_j, \bar s_j$ fails the FD $\depas{t_0}{u_0}$ for each $1\leq j\leq n$, thereby ensuring that any repair contains at most one tuple from $\{s_j, \bar s_j\}$, and (C2) for each ID  $\incas{t_i}{u_i}$ the value of $s_\varphi(t_i) $ is shared only by the tuple $s\in T\setminus\{s_\varphi\}$ such that their corresponding literal satisfies the clause $C_i$.
		Formally, we let $\dom(T)\dfn\{c_i \mid i\leq m\}\cup\{p_1,\ldots, p_n\}\cup\{0,1\}$.
		Then, (C1) is achieved by setting $s_j(t_0)=\bar s_j(t_0)=p_j$, $s_j(u_0)=1$, and $\bar s_j(u_0)=0$. Moreover, we also let $s_\varphi(t_0)= s_\varphi(u_0)=0$. 
		To achieve (C2) we let $s_\varphi(t_i)=c_i$ for each $1\leq i\leq m$ and $s_\varphi(u_i)=0$. Then, we let $s_j(t_i)=s_j(u_i)=c_i$ if $p_j\in C_i$, $\bar s_j(t_i)=\bar s_j(u_i)=c_i$ if $\neg p_j\in C_i$, and we let $s_j(t_i)=s_j(u_i) = 0 = \bar s_j(t_i)= \bar s_j(u_i)$ in the remaining cases. 
	\end{itemize}
	Clearly, $T\not\models B$ due to the presence of the pair $s_j, \bar s_j$ and the FD $\depas{t_0}{u_0}$.
	Then, a repair $P\subseteq T$ for $\depas{t_0}{u_0}$ contains exactly one tuple from each such pair. 
	Finally, $P\models \incas{t_i}{u_i}$ for ${1\leq i\leq m}$ iff $P$ contains at least one tuple $s\in \{s_j,\bar s_j\}$ corresponding to $x\in \{p_j,\neg p_j\}$ such that $x\in C_i$ for each $C_i\in\varphi$.
	This completes the proof since $\varphi$ is satisfiable if and only if $\somerepair({s_\varphi},\calB)$ is true. 
	
	To reduce $\SAT$ into $\REP$, we use an additional ID, $\incas{t_{m+1}}{u_{m+1}}$ and modify $T$ in such a way that every repair $P$ for $\calB$ necessarily contains $s_\varphi$, thereby proving the equivalence as before. 
	This is achieved by adding a new element $c_{m+1}\in\dom(T)$ and setting $s_{\varphi}(t_{m+1}) = s_{\varphi}(u_{m+1})= c_{m+1}$, as well as $s_j(t_{m+1}) = \bar s_j(t_{m+1}) = c_{m+1} $ and $s_j(u_{m+1}) = \bar s_j(u_{m+1}) = 0 $ for each $1\leq j\leq n$.
	This has the effect that one can not construct a subset-repair for $\calB$ by excluding $s_\varphi$ and therefore, there is a non-empty repair for $\calB$ iff $\varphi$ is satisfiable.
	This completes the proof for both cases.
	\qed 
\end{proof}

We provide an example for better understanding of the reductions from the proof of Theorem~\ref{thm:cred-both}.

\begin{example}\label{ex:red-both}
	Let $\varphi \dfn \{x \lor y, \neg x \lor \neg y, \neg x \lor y\}$ be a propositional formula.
	Then, our reduction for $\somerepair$ yields a database $T\dfn \{s_\varphi, s_x,s_y,\bar s_x, \bar s_y\}$ and a collection $B\dfn \{\depas{t_0}{u_0}\}\cup \{\incas{t_i}{u_i} \mid 1 \leq i \leq 3\}$ of dependencies as depicted in the left side of Table~\ref{tab:red-both}.
	Notice that the only satisfying assignment for $\varphi$ is given by $\{x\mapsto 0, y\mapsto 1\}$, corresponding to the  repair $\{s_\varphi, \bar s_x, s_y\}$ for $\calB$ containing $s_\varphi$, and consequently $\somerepair({s_\varphi},\calB)$ is true.
	
	Moreover, although each of $\{s_x,s_y\},\{s_x,\bar s_y\},\{\bar s_x,\bar s_y\}$ is also a repair for $\calB$, 
	none of them contains $s_\varphi$.
	Then, the second part of our reduction (for $\REP$) adds the inclusion dependency $\incas{t_{4}}{u_{4}}$ to $B$ and expands the database $T$ by two columns in the right side of Table~\ref{tab:red-both}.
	This results in $\{s_\varphi,\bar s_x, s_y\}$ being the only $\REP$ for $\calB$ corresponding to the satisfying assignment for $\varphi$.
\begin{table}[ht]
	\centering
	\begin{tabular}{l@{\hskip5pt} |@{\hskip5pt} cc @{\hskip5pt}| @{\hskip5pt}cccccc}
		& $t_0$ & $u_0$  & $t_1$ & $u_1$& $t_2$ & $u_2$ & $t_3$ & $u_3 $\\ \hline 
	$s_\varphi$  
	& 0 & 0  & $c_1$ & 0 & $c_2$ & 0 & $c_3$ & $0$\\ \hline
	$s_x$ 
	& $x$ & 1  & $c_1$ & $c_1$ & $0$ & 0 & $0$ & $0$\\
	$\bar s_x$ 
	& $x$ & 0  & $0$ & 0 & $c_2$ & $c_2$ & $c_3$ & $c_3$\\ \hline 
	$s_y$ 
	& $y$ & 1  & $c_1$ & $c_1$ & $0$ & 0 & $c_3$ & $c_3$\\
	$\bar s_y$
	& $y$ & 0  & $0$ & 0 & $c_2$ & $c_2$ & $0$ & $0$
	\end{tabular}
\hspace{1cm}
	\begin{tabular}{@{\hskip2pt}|@{\hskip5pt} cc @{\hskip5pt}}
	 $t_4$ & $u_4$ \\ \hline 
	 $c_4$ & $c_4$  \\ \hline
	 $c_4$ & $0$ \\
	 $c_4$ & 0 \\ \hline 
	 $c_4$ & $0$ \\
	 $c_4$ & $0$ 
\end{tabular}
\caption{The database corresponding to the formula $\varphi$ from Example~\ref{ex:red-both}}
\label{tab:red-both}
	\end{table}
\end{example}
%

Next, we prove that $\allrepair(s,\calB)$ is even harder and $\PiP$-complete.

\begin{theorem}\label{thm:skep-both}
	Let $\mathcal B = \langle T, B \rangle$ be an instance of $\REP$ including a set $B$ of FDs and IDs.
	Further, let $s\in T$ be a tuple.
	Then, the problem 
	$\allrepair(s,\calB)$ 
is \PiP-complete.
\end{theorem}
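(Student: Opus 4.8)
My plan is to prove membership and hardness separately, leveraging Theorem~\ref{thm:ext-both}, which tells us that repairs of $\calB$ coincide with preferred extensions of $\AF{F}{B}$, together with the known fact recalled in Section~\ref{sec:preli} that $\skep_{\pref}$ is $\PiP$-complete. For membership, I would argue directly: $\allrepair(s,\calB)$ fails iff there exists a subset $P\subseteq T$ with $s\notin P$ such that $P\models B$ and $P$ is maximal consistent. The complement is therefore decidable by a guess-and-check with an $\NP$ oracle: guess $P\not\ni s$, verify $P\models B$ in polynomial time, and verify maximality by calling an $\NP$ oracle (``is there $P'$ with $P\subsetneq P'\subseteq T$ and $P'\models B$?''). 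Hence the complement is in $\NP^{\NP}=\SigmaP$, so $\allrepair(s,\calB)\in\PiP$. Alternatively, one can simply invoke that $\AF{F}{B}$ is polynomial-time computable and that $\allrepair(s,\calB)$ holds iff $s$ is skeptically accepted under $\pref$ in $\AF{F}{B}$ (by Theorem~\ref{thm:ext-both}, every preferred extension is a repair and vice versa, and since all repairs are non-empty the empty-set caveat is harmless), giving membership immediately from $\skep_{\pref}\in\PiP$.

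For hardness I would reduce from the canonical $\PiP$-complete problem, validity of a $\forall\exists$-quantified Boolean formula $\Phi = \forall p_1\cdots\forall p_n\,\exists q_1\cdots\exists q_k\,\psi$ with $\psi$ in CNF. The natural route is to reuse and extend the construction from the proof of Theorem~\ref{thm:cred-both}. There, an existentially quantified SAT instance is encoded so that choosing one tuple from each pair $\{s_j,\bar s_j\}$ corresponds to a truth assignment, the single FD $\depa{t_0}{u_0}$ forces this choice, and the IDs $\inca{t_i}{u_i}$ force every clause to be satisfied. To handle the universal block, I would make the tuples encoding the universally quantified variables $p_1,\dots,p_n$ \emph{unremovable} — analogously to how the $\REP$-variant of Theorem~\ref{thm:cred-both} forces $s_\varphi$ into every repair via an extra ID with a private value $c_{m+1}$. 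Concretely, for each universal variable $p_j$ I would add a dedicated ID (and private domain value) that can only be satisfied by $s_{p_j}$ together with $\bar s_{p_j}$ being excluded or vice versa, so that exactly one of the two is forced in, but which one is free; the existential variables $q_1,\dots,q_k$ remain genuinely free choices as in the original construction. Then I add a distinguished tuple $s_\star$ (playing the role of $s_\varphi$, forced into every repair by a private ID as in the $\REP$-construction) whose presence is consistent with a repair precisely when the clauses are satisfiable under the chosen total assignment; finally I introduce one more tuple $t_{\mathrm{goal}}$ and wire the dependencies so that $t_{\mathrm{goal}}$ belongs to every repair iff, for every way of resolving the universal choices, the existential tuples can be chosen to satisfy $\psi$. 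The cleanest implementation of this last ``iff every completion of the universal part admits a satisfying existential part'' is to make $t_{\mathrm{goal}}$ be forced out by some ID exactly when there is a repair (a consistent maximal subset) in which $\psi$ is falsified; since repairs are maximal, the absence of a falsifying-but-maximal subset is the $\forall\exists$ condition, and so $\allrepair(t_{\mathrm{goal}},\calB)$ holds iff $\Phi$ is valid.

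The main obstacle I anticipate is the interaction between \emph{maximality} of repairs and the $\forall\exists$ alternation: unlike credulous reasoning, where we only needed to exhibit one repair, here the universal quantifier ranges over all maximal consistent subsets, and we must ensure that the encoding does not allow ``cheating'' maximal subsets (e.g. ones that drop a universal-variable tuple entirely, or drop $s_\star$) that would make $t_{\mathrm{goal}}$ spuriously absent or present. Pinning down the gadget so that (i) every repair corresponds to a genuine total assignment to $p_1,\dots,p_n$ extended by some assignment to $q_1,\dots,q_k$, (ii) no repair can avoid the forced tuples, and (iii) $t_{\mathrm{goal}}$ is in a given repair iff the corresponding assignment satisfies $\psi$ — all simultaneously — is the delicate part, and is where I would spend most of the care, likely accompanied by a small worked example in the style of Example~\ref{ex:red-both} to make the correctness transparent.
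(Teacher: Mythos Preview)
Your membership argument is fine and matches the paper's. The hardness plan, however, has a genuine gap in the quantifier bookkeeping, and the paper resolves it with the \emph{opposite} of what you propose.

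You propose to make the \emph{universal}-variable tuples unremovable and leave the existential ones free, then introduce a separate target $t_{\mathrm{goal}}$ that carries the clause IDs. But with existential tuples freely includable, any repair contains a full assignment $I_Y\cup I_Z$ to all variables, and a repair omits $t_{\mathrm{goal}}$ precisely when that particular total assignment falsifies $\psi$. Hence $\allrepair(t_{\mathrm{goal}},\calB)$ would hold iff \emph{every} total assignment satisfies $\psi$, i.e.\ you encode $\forall Y\,\forall Z\,\psi$, which only yields $\co\NP$-hardness. Maximality never gets a chance to simulate the inner $\exists Z$, because nothing forces a repair without $t_{\mathrm{goal}}$ to be free of existential tuples. (Relatedly, your description of $s_\star$ is self-contradictory: a tuple cannot simultaneously be ``forced into every repair'' and ``consistent with a repair precisely when the clauses are satisfiable''.)

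The paper's trick is the reverse: leave the $Y$-tuples independent, and tie the \emph{existential} $Z$-tuples to the target $s_\varphi$ via one additional ID $\incas{t_{m+1}}{u_{m+1}}$, where only $s_\varphi$ provides the needed value $c_{m+1}$ in column $u_{m+1}$, while every $s_z,\bar s_z$ demands it in column $t_{m+1}$. Then any consistent set that avoids $s_\varphi$ must avoid all $Z$-tuples and is therefore some $P_Y$; such a $P_Y$ is \emph{maximal} (i.e.\ a repair) iff no choice of $Z$-tuples together with $s_\varphi$ can be added, i.e.\ iff no $I_Z$ satisfies $\varphi$ with $I_Y$. Thus a repair omitting $s_\varphi$ exists iff $\Phi$ is false, giving the $\PiP$ lower bound with $s_\varphi$ itself as the query tuple and no extra $s_\star$ or $t_{\mathrm{goal}}$ needed. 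The missing idea in your plan is precisely this coupling of the existential block to the target so that maximality of a target-free repair encodes the inner universal (``for all $I_Z$, $\varphi$ fails'').
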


\begin{proof}
	
	For membership, one can guess a subset $P\subseteq T$ as a counter example for $s$, that is, $s\not \in P$ and $P$ is a subset-repair for $\calB$, which can be decided in polynomial time. 
	However, to determine whether $P$ is maximal, one has to use oracle calls for guessing subsets $P' \subseteq T$, with $s\not\in P'$ to determine whether $P'\models B$ and $P'\supset P$. 
	This gives an upper bound of $\co\NP^{\NP}$ (equivalently $\PiP$).

	For hardness, we use a similar idea as in the proof of Theorem~\ref{thm:cred-both} and reduce from an instance $\Phi$ of the $\PiP$-complete problem $\tqbf$, where $\Phi = \forall Y \exists Z \varphi(Y,Z)$ and $\varphi \dfn \{C_i \mid i\leq m\}$ is a $\CNF$.
	We let $X=Y\cup Z$ and construct an instance $\calB \dfn \langle T,B \rangle$ over a set $V \dfn \{t_i,u_i \mid 0\leq i \leq m+1\}$ of attributes.
	As in the proof of Theorem~\ref{thm:cred-both}, $B$ contains a FD $\depas{t_0}{u_0}$ and a collection of IDs $I\dfn \{ \incas{t_i}{u_i} \mid 1\leq i\leq m+1\}$ to encode whether each clause $C_i\in \varphi$ is satisfied. 
	Moreover, the additional ID $\incas{t_{m+1}}{u_{m+1}}$ encodes the existentially quantified variables $Z$.
	The database is also constructed as in the proof of Theorem~\ref{thm:cred-both}, except for the attributes $\{t_{m+1}, u_{m+1}\}$.
	These attributes encode the effect that $s_\varphi$ \emph{supports} variables $\{z,\bar z\}$ for each $z\in Z$ via the inclusion dependency $\incas{t_{m+1}}{u_{m+1}}$.
	This is achieved by letting $s_z(t_{m+1})=\bar s_z(t_{m+1})= c_{m+1}$ for each such $z \in Z$, as well as $s_z(u_{m+1})=\bar s_z(u_{m+1})= 0$ and $s_\varphi(t_{m+1}) = s_\varphi(u_{m+1}) = c_{m+1}$.

	For correctness, notice that every interpretation $I_Y$ over $Y$ (seen as a subset of $Y$) corresponds to a subset $P_Y =\{s_y \mid y\in I_Y\} \cup \{\bar s_y \mid y \not \in I_Y\}$.
	Then, $P_Y\models B$:  $\depas{t_0}{u_0}$ is true since $P_Y$ includes only one of $s_y, \bar s_y$ and each  $\incas{t_i}{u_i}$ is true since $s(t_i)=s(u_i)$ for each $s\in T\setminus \{s_\varphi\}$ and $i\leq m+1$.
	Moreover, we have that $s_\varphi \not \in P_Y$.
	Now, in order to extend $P_Y$ by adding $s_z$ or $\bar s_z$ for any $z\in Z$, $s_\varphi$ must be added as well due to the ID $\incas{t_{m+1}}{u_{m+1}}$.
	However, in order to add $s_\varphi$, we have to find $I_Y$ and $I_Z$ that together satisfy $\varphi$ due to the IDs $\incas{t_i}{u_i}$ for $i\leq m$.
	As a result, for any interpretation $I_Y$ over $Y$: $I_Y \cup I_Z \models \varphi$ if and only if $P_Y$ is not a repair for $\calB$ (since, $P_Y\cup P_Z\cup\{s_\varphi\}$ is a repair in such a case).
	Equivalently, there is a repair for $\calB$ not containing $s_\varphi$ if and only if the formula $\Phi$ is false.
	As a consequence, $\Phi$ is true if and only if every repair for $\calB$ contains $s_\varphi$ if and only if $\allrepair(s_\varphi, \calB)$ is true.
	\qed 
\end{proof}

We provide an example for better understanding of the reduction from the proof of Theorem~\ref{thm:skep-both}.

\begin{example}\label{ex2:red-both}
	Let $\Phi = \forall y_1 y_2 \exists z_3z_4((y_1 \lor y_2 \lor z_3) \land (y_2 \lor \neg z_3 \lor \neg z_4) \land (y_2 \lor z_3 \lor z_4))$ be a $\tqbf$.
	Then, our reduction yields a database $T\dfn \{s_\varphi, s_1,s_2,s_3,s_4,\bar s_1, \bar s_2,\bar s_3,\bar s_4\}$ and a collection $B\dfn \{\depas{t_0}{u_0}\}\cup \{\incas{t_i}{u_i} \mid i\leq 4\}$ of dependencies as depicted in Table~\ref{tab2:red-both}.
	The reader can verify that the formula $\Phi$ is true and that $\allrepair(s_\varphi,\calB)$ is true as well.
	\begin{table}[t]
		\centering
		\begin{tabular}{l@{\hskip5pt}|@{\hskip5pt} cc@{\hskip5pt}|@{\hskip3pt}cccccccc}
			& $t_0$ & $u_0$  & $t_1$ & $u_1$& $t_2$ & $u_2$ & $t_3$ & $u_3 $ & $t_4$ & $u_4$\\ \hline 
			$s_\varphi$  
			& 0 & 0  & $c_1$ & 0 & $c_2$ & 0 & $c_3$ & $0$ & $c_4$ & $c_4$ \\ \hline
			$s_1$ 
			& $y_1$ & 1  & $c_1$ & $c_1$ & $0$ & $0$ & $0$ & $0$ & 0 & $0$\\
			$\bar s_1$ 
			& $y_1$ & 0  & $0$ & 0 & $0$ & $0$ & $0$ & $0$ & 0 & $0$\\ \hline 
			$s_2$
			& $y_2$ & 1  & $c_1$ & $c_1$ & $c_2$ & $c_2$ & $c_3$ & $c_3$ & 0 & $0$\\
			$\bar s_2$
			& $y_2$ & 0  & $0$ & 0 & $0$ & $0$ & $0$ & $0$ & 0 & $0$\\ \hline 
			$s_3$ 
			& $z_3$ & 1  & $c_1$ & $c_1$ & $0$ & 0 & $c_3$ & $c_3$ & $c_4$ & $0$ \\
			$\bar s_3$
			& $z_3$ & 0  & $0$ & 0 & $c_2$ & $c_2$ & $0$ & $0$ & $c_4$ & $0$ \\ \hline
			$s_4$ 
			& $z_4$ & 1  & $0$ & $0$ & $0$ & 0 & $c_3$ & $c_3$ & $c_4$ & $0$\\
			$\bar s_4$
			& $z_4$ & 0  & $0$ & 0 & $c_2$ & $c_2$ & $0$ & $0$ & $c_4$ & $0$
		\end{tabular}
		\caption{The team corresponding to the instance $\calB$ from Example~\ref{ex2:red-both}}
		\label{tab2:red-both}
	\end{table}
	
\end{example}

We conclude this section by noting that the least requirements for subset-repairs in the presence of both atoms is conflict-freeness and admissibility.
Furthermore, although admissible extensions for $\AF{F}{B}$ yield subset-repairs for $\calB$, those are not guaranteed to be maximal.
%
%
\section{From Unirelational to Multirelational Databases}\label{sec:multi}

A database instance in multirelational setting consists of a collection $\calT=(T_1,\ldots T_m)$, where each $T_j$ is a database corresponding to a relational schema $T_j(x_{j,1} ,\ldots, x_{j,n_j})$ of arity $n_j$.
As before, $T_j$ denotes the relation name and $x_{j,1},\ldots,x_{j,n_j}$ are distinct {attributes}.
In the following, we let $T= \bigcup_{j\leq m} T_j$ denote the set of all the tuples in $\calT$.
A {functional dependency} over $\calT$, and the satisfaction for FDs is defined as before, i.e., an expression of the form $\depa{x}{y}$ for sequences $\tuple x, \tuple y$ of attributes in $T_j$ for some $j\leq m$.
However, an {inclusion dependency} may address attributes from two different tables.
That is, $i= \inca{x}{y}$ is an ID between $T_j$ and $T_k$ if $\tuple{x}$ and $\tuple y$ are sequences of attributes over $T_j$ and $T_k$, respectively.
We call $T_j$ the \emph{source} and $T_k$ the \emph{target} of $i$, denoted as $\source(i)$ and $\target(i)$. 
Then, $\calT\models \inca{x}{y}$ if for each $s\in \source(\inca{x}{y})$, there is some $t\in \target(\inca{x}{y})$ such that $s(\tuple x)=t(\tuple y)$.
By slightly abusing the notation, if $d\dfn \depa{x}{y}$ is an FD over attributes in $T_j$, then we write $\source(d)=T_j$.
Finally, we define the notion of (subset-maximal) repairs similar to the case of unirelational databases,  i.e., $\mathcal P = (P_1,\ldots,P_m)$ where $P_j\subseteq T_j$ 
for $j\leq m$ such that $\mathcal P$ satisfies each dependency in $ B$.

The construction from Section~\ref{sec:rep-afs} for unirelational setting can be expanded to allow databases with more than one relations.
The encoding for FDs remains the same as before, whereas for IDs (between $T_j$ and $T_k$), we create auxiliary arguments only for tuples in $\source(i)$, which can be attacked by arguments corresponding to tuples in the $\target(i)$.
As before, we denote by $\support{i}{s}=\{t \mid t\in\target(i),s(\tuple x)=t(\tuple y)\}$ the tuples supporting $s$ for an ID $i=\inca{x}{y}$.
Given an instance $\calB = \langle \calT, B\rangle $ of a multirelational database $\calT$ and a collection $B = D\cup I$ of FDs $D$ and IDs $I$, we construct the AF~$\AF{F}{B}=(A,R_\text{D}\cup R_\text{I})$ as follows.
\begin{itemize}
	\item $A \dfn  \{s \mid s\in T \} \cup \{s_i \mid s\in \source(i) \text{ for } i\in I\}$,
	\item $R_\text{D} \dfn \{(s,t),(t,s) \mid s,t \in \source(d) \text{ for } d \in D  \text{ and } \{s,t\}\not\models d \}$,
	\item $R_{\text{I}}\dfn \{(s_i,s), (s_i,s_i) \mid s\in \source(i) \text{ for } i\in I \}\cup \{(t,s_i)\mid t\in \support{i}{s} \text{ for } i\in I\}$.
\end{itemize}

Then, a similar argument as in the proof of Theorem~\ref{thm:ext-both} allows us to establish that repairs for an instance $\calB =\langle\calT,B\rangle $ including a multirelational database $\calT$ and a collection $B$ of FDs and IDs are precisely the preferred extensions in $\AF{F}{B}$.

\section{Concluding Remarks}
\paragraph{Overview.}
We 
simulated the problem of finding repairs of an inconsistent database under functional and inclusion dependencies by Dung's argumentation frameworks.
Our main results (See Table~\ref{table:cont}) indicate that subset maximal repairs correspond to naive extensions when only one type of dependency is allowed, whereas only preferred extensions yield all the repairs when both FDs and IDs are allowed.
Further, for the problem to determine whether a tuple is in some (resp., every) repair, we establish the same complexity bounds as the complexity of credulous (skeptical) reasoning for preferred-semantics in AFs.

\paragraph{Discussion and Future Work.}
We would like to point out that, although, the conflict relation in the presence of functional dependencies and a connection between preferred extensions and subset maximal repairs is known for FDs~\cite{BienvenuB20}, the main contributions of our work establishes the relation between extensions of AFs when inclusion dependencies are also allowed.
This novel contribution opens up several directions for future work.
First and foremost, the authors believe that the connection between repairs in the setting of inconsistent databases and extensions in AFs is stronger than what is established here.
Intuitively, one can model the attack relationship via functional, and defense/support via inclusion dependencies.
A precise formulation of this transformation will allow us to simulate AFs via inconsistent databases by considering FDs and IDs.
However, this intuition needs further exploration and is therefore left for future work. 

Further notable future work may consider whether the idea presented here can be generalized to other well-known types of tuple or equality generating dependencies (also known as \emph{tgds} and \emph{egds}).
%
%
Moreover, we would like to explore whether the consistent query answering (CQA) under inconsistency-tolerant semantics can also be tackled via the argumentation approach.
Then, one can consider incorporating the information about priorities among tuples into the resulting AFs, that is, extending the translations presented in this work to the setting of prioritized repairing and consistent query answering~\cite{FaginKK15,kimelfeld-17,KimelfeldLP20}.

Another promising direction to consider next is the exploration of an explainability dimension. 
Given an instance $\calB$ including a database $T$ and a collection $B$ of dependencies, then the proposed AF $\AF{F}{B}$ lets one determine the \emph{causes} why some tuples are not in some repair (or all repairs).
For IDs, the auxiliary arguments modelling each dependency in $B$ can serve this purpose.
For FDs, we believe that annotating arguments (or the attack relation between a pair of arguments) by the FDs involved in the conflict can achieve the goal.
As a result, one can look at the AF $\AF{F}{B}$ and read from it the FDs or IDs which a tuple $s$ failing $\somerepair(s,\calB)$ or $\allrepair(s,\calB)$ participates in.
Then, subsets of the atoms and/or possibly tuples in a database can be considered as \emph{explanations}.
Such explanations seem interesting in modeling scenarios where the data (database) has higher confidence than the depedencies; for example, if dependencies are mined over some part of the existing data.
An explanation then informs that the data (and hence tuples therein) should be kept, whereas dependencies need to be screened and further analyzed. 

%
\bibliographystyle{splncs04}
\bibliography{main}

\end{document}